\newtheorem{proposition}{Proposition}[section]
\newtheorem{remark}[proposition]{Remark}
\DeclareMathOperator{\sech}{sech}
\title{\textbf{Painlevé Integrability and Shifted Nonlocal Reductions of a Variable Coefficient Coupled HI mKdV System}}
\author{
  Taylan Demir \\
  Department of Mathematics, Ankara University \\
  Ankara, Türkiye \\
  \texttt{}
}
\date{}
\begin{document}
\maketitle
\begin{abstract}
We analyze a variable coefficient coupled HI mKdV system that has shifted nonlocal reductions. The Weiss Tabor Carnevale test gives us coefficient restrictions to perform a time reparametrization to achieve an autonomous integrable model. We also show a Hirota bilinear form along with a simplified example to demonstrate how the shifted symmetries create new symmetry centers, but do not affect the shape of the soliton.
\end{abstract}

\section{Introduction}
The modified Korteweg-de Vries (mKdV) systems, used in multiple equations, provide insight into the large variety of wave solutions of integrable nonlinear systems (hereafter 'nonlinearity') and help to explain the continuum of such solutions, from multi-component generalized forms of the single scalar mKdV equation to reductions to simpler cases as solutions, through all of the integrable hierarchies referred to previously [1.2]. Iwao and Hirota constructed a very rich multi-solitonic structure associated with their classical construction of the coupled mKdV systems [3], which uses pfaffian structures and parameter conditions that represent the geometry of a multi-component systems interaction. This "HI-mKdV" construction has become a popular way to test for reductions, soliton interactions, and integrable deformations of this class of systems. [3,7] A separate line of development started by Ablowitz and Musslimani shows that integrability persists even after applying a form of non-local symmetries, specifically under conditions where the nonlinear interaction couples a field at $(x,t)$ with its image at reflected (space and/or time) arguments [5,6]. The equations generated through these reductions are $PT$ -type non-local equations that possess Lax pairs, conservation laws, and inverse-scattering formulations [6] and demonstrate solution phenomena not present in the local theories. Pekcan classified systematically local and non-local reductions of the coupled Hirota-Ito family (HI-mKdV).  Non-local reductions of the coupled Hirota-Ito (HI-mKdV) [4,5,6,7] family were classified into two types: local and non-local.  Bilinear forms;  one-soliton solutions constructed using pfaffian technology that were analyzed by component size; produced results to clarify the relationship between index permutations and sign conditions for consistent reductions. More recently, an additional degree of freedom has emerged in the notion of nonlocality: \emph{shifted} nonlocal reductions [8], where the reflected arguments are translated by fixed offsets (e.g., $x\mapsto -x+x_0$, $t\mapsto -t+t_0$). Inspiration and Insight for the future these changes connect the traditional (local) soliton solutions and the nontraditional (nonlocal with shifted terms) soliton solutions. The connections between local solutions (like KDP or NLS) and nonlocal solutions (like CNLS or nonlocal Altmanns-type equations) lead to new properties of the soliton solutions obtained with nonlocal reductions, including different singularity profiles and new threshold criteria for families of solitons. In particular, I am drawing inspiration from the work of Gürses and Pekcan, who obtained concrete examples of one- and two-soliton solutions with shifted trajectories for nonlocal NLS and mKdV equations and showed through qualitative analysis how the shift in trajectory impacted the qualitative structure of the solution. The goal of this note is to bring together these ideas in a single presentation that has not been done before. To do this, I will study a variable-coefficient coupled HI (Hirota's method) and mKdV (Korteweg–de Vries) type system and impose compatibility with the shifted nonlocal reductions on them. Due to the fact that variable coefficients destroy integrability unless specific rigid differential conditions are satisfied, I will use the Weiss–Tabor–Carnevale version of the Painlevé Analysis as an integrability diagnostic to assess structural integrability of the variable-coefficient cases under study. The main result of this work is a classification of the types of variable coefficients that will produce soliton solutions that pass the Weiss–Tabor–Carnevale test for integrability; this includes the relationships that will arise from the conditions (constraints) that relate dispersion, non-linearity, and coupling. Finally, we will describe some canonical transformations that can be used to map the variable-coefficient solutions back into the normal forms of constant coefficients, thereby allowing us to distinguish between genuinely new variable-coefficient deformations versus those that can be transformed into the classical HI–mKdV solutions.
\section{Variable-coefficient coupled HI--mKdV system and shifted-nonlocal reductions}
\subsection{A variable-coefficient coupled HI--mKdV family}
Let $v(x,t)=(v_1(x,t),\dots,v_N(x,t))^{\mathsf T}$ be an $N$-component (in general complex-valued) field.
Fix a coupling matrix $C=(c_{jk})_{1\le j,k\le N}$ with $c_{jj}=0$ and define the quadratic form [1,2]
\begin{equation}\label{eq:rho}
\rho(x,t):=\sum_{j,k=1}^N c_{jk}\,v_j(x,t)\,v_k(x,t).
\end{equation}
We consider the following variable-coefficient coupled HI--mKdV-type system [1,2]:
\begin{equation}\label{eq:VC-HIMKdV}
\mu(t)\,v_{i,t}+a(t)\,v_{i,xxx}+3\,b(t)\,\rho(x,t)\,v_{i,x}=0,
\qquad i=1,\dots,N,
\end{equation}
where $\mu(t)\neq 0$ and $a(t),b(t)$ are scalar coefficient functions.
We choose to base the t dependence of our model on a time evolution system, which enables us to use a non-autonomous dispersion (a) and a nonlinear (b) system in our model [3]. Specifically, we could choose to reparametrize $\mu(t)$ based on when we want to add time reversal reductions, however, we will keep $\mu(t)$ in our equations for clarity [3,4].

\subsection{Shifted-nonlocal reduction ansatz}
Introduce shifted reflection variables [4]
\begin{equation}\label{eq:XT}
X:=\varepsilon_1 x+x_0,\qquad T:=\varepsilon_2 t+t_0,
\qquad \varepsilon_1^2=\varepsilon_2^2=1,
\end{equation}
and let $\pi$ be an involutive permutation of $\{1,\dots,N\}$ (i.e., $\pi^2=\mathrm{id}$).
Fix signs $\sigma_i\in\{\pm 1\}$ and define the shifted-nonlocal reduction [4,5]
\begin{equation}\label{eq:reduction}
v_i(x,t)=\sigma_i\,\overline{v_{\pi(i)}(X,T)}\qquad (i=1,\dots,N),
\end{equation}
where the overbar denotes complex conjugation (for a purely real reduction, one may drop the conjugation).

Differentiating \eqref{eq:reduction} yields
\[
\begin{aligned}
v_{i,t}   &= \sigma_i\,\varepsilon_2\,\overline{v_{\pi(i),t}(X,T)},\qquad
v_{i,x}   &= \sigma_i\,\varepsilon_1\,\overline{v_{\pi(i),x}(X,T)},\\
v_{i,xxx} &= \sigma_i\,\varepsilon_1^3\,\overline{v_{\pi(i),xxx}(X,T)}.
\end{aligned}
\]

Substituting these into \eqref{eq:VC-HIMKdV}, we obtain
\begin{equation}\label{eq:substituted}
\varepsilon_2\mu(t)\,\overline{v_{\pi(i),t}}
+\varepsilon_1^3 a(t)\,\overline{v_{\pi(i),xxx}}
+3\,\varepsilon_1 b(t)\,\rho(x,t)\,\overline{v_{\pi(i),x}}=0
\quad\text{evaluated at }(X,T).
\end{equation}

To compare \eqref{eq:substituted} with the conjugate of the $\pi(i)$-equation in \eqref{eq:VC-HIMKdV} evaluated at $(X,T)$,
we allow an overall factor $\kappa\in\{\pm 1\}$ (since $F=0$ is equivalent to $\kappa F=0$) [5]. Hence we require
\begin{equation}\label{eq:coeff-conditions}
\varepsilon_2\,\mu(t)=\kappa\,\overline{\mu(T)},\qquad
\varepsilon_1^3\,a(t)=\kappa\,\overline{a(T)},\qquad
\varepsilon_1\,b(t)=\kappa\,\overline{b(T)}.
\end{equation}
In addition, the nonlinearity must transform consistently, i.e.
\begin{equation}\label{eq:rho-condition}
\rho(x,t)=\overline{\rho(X,T)}.
\end{equation}
Using \eqref{eq:rho} and \eqref{eq:reduction}, condition \eqref{eq:rho-condition} is equivalent to the coupling constraint
\begin{equation}\label{eq:C-condition}
c_{jk}\,\sigma_j\sigma_k=\overline{c_{\pi(j)\pi(k)}}\qquad (1\le j,k\le N).
\end{equation}
For real couplings $c_{jk}\in\mathbb{R}$, \eqref{eq:C-condition} reduces to
$c_{jk}\,\sigma_j\sigma_k=c_{\pi(j)\pi(k)}$, which is a combined sign--permutation symmetry of the matrix $C$.

\subsection{Admissible shifted reflection types}
The conditions \eqref{eq:coeff-conditions}--\eqref{eq:C-condition} characterize when the reduction \eqref{eq:reduction}
is compatible with the variable-coefficient system \eqref{eq:VC-HIMKdV}. Two cases are particularly relevant [5]:

\medskip\noindent
\textbf{(i) Local case.} $(\varepsilon_1,\varepsilon_2,\kappa)=(1,1,1)$ and $(x_0,t_0)=(0,0)$ give the standard local model.

\medskip\noindent
\textbf{(ii) Shifted $PT$-type case.} Choosing $(\varepsilon_1,\varepsilon_2,\kappa)=(-1,-1,-1)$ leads to
\[
X=-x+x_0,\qquad T=-t+t_0,
\]
and \eqref{eq:coeff-conditions} becomes
\[
\mu(t)=\overline{\mu(-t+t_0)},\qquad a(t)=\overline{a(-t+t_0)},\qquad b(t)=\overline{b(-t+t_0)}.
\]
In particular, constant real coefficients automatically satisfy these relations (for any fixed shift $t_0$),
whereas genuinely time-dependent coefficients must be symmetric with respect to the reflection point $t=t_0/2$ [4,5].

\medskip
Under \eqref{eq:coeff-conditions} and \eqref{eq:C-condition}, the reduction \eqref{eq:reduction} produces a closed shifted-nonlocal
system in which $\rho(x,t)$ couples $v(x,t)$ to $\overline{v(X,T)}$, and the resulting model inherits a consistent evolution form.
\section{WTC Painlev\'e analysis and coefficient constraints}

We test the Painlev\'e property for the variable-coefficient coupled HI--mKdV family [4]
\begin{equation}\label{eq:VC-HIMKdV-WTC}
\mu(t)\,v_{i,t}+a(t)\,v_{i,xxx}+3\,b(t)\,\rho(x,t)\,v_{i,x}=0,\qquad i=1,\dots,N,
\end{equation}
where
\begin{equation}\label{eq:rho-WTC}
\rho(x,t)=\sum_{j,k=1}^N c_{jk}\,v_j(x,t)\,v_k(x,t),\qquad c_{jj}=0.
\end{equation}
Let $\phi(x,t)=0$ be a movable singular manifold with $\phi_x\neq 0$ [4].

\subsection{Leading order}
We seek local expansions of the WTC form [4,13]
\begin{equation}\label{eq:WTC-expansion}
v_i(x,t)=\sum_{n=0}^{\infty} v_{i,n}(t)\,\phi(x,t)^{n+p},\qquad i=1,\dots,N,
\end{equation}
with a common leading exponent $p<0$. Balancing the most singular contributions coming from 
$v_{i,xxx}$ and $\rho\,v_{i,x}$ yields $p=-1$ [4]. Indeed, with
$v_i\sim v_{i,0}\phi^{-1}$ one has the dominant behaviors
\[
v_{i,xxx}\sim -6\,v_{i,0}\,\phi_x^3\,\phi^{-4},\qquad
v_{i,x}\sim -v_{i,0}\,\phi_x\,\phi^{-2},\qquad
\rho\sim \rho_0\,\phi^{-2},
\]
where $\rho_0=\sum_{j,k}c_{jk}v_{j,0}v_{k,0}$. Substituting into \eqref{eq:VC-HIMKdV-WTC} at order $\phi^{-4}$
gives the leading-order constraint
\begin{equation}\label{eq:leading-constraint}
\rho_0=-2\,\frac{a(t)}{b(t)}\,\phi_x^2.
\end{equation}
Equivalently,
\[
\sum_{j,k=1}^N c_{jk}\,v_{j,0}(t)\,v_{k,0}(t)=-2\,\frac{a(t)}{b(t)}\,\phi_x^2.
\]
Thus the vector $(v_{1,0},\dots,v_{N,0})$ is constrained by a single quadratic relation, leaving (generically) $N-1$
free leading amplitudes.

\subsection{Resonances}
To locate resonances, linearize around the leading behavior by setting [13]
\[
v_i=v_{i,0}\phi^{-1}+w_i\,\phi^{r-1},
\]
and collect the coefficients at the order $\phi^{r-4}$ (the same order as the dominant balance).
A standard computation yields a resonance structure consisting of [13]:
\begin{itemize}
\item $r=-1$, corresponding to the arbitrariness of the manifold $\phi$;
\item $r=0$ with multiplicity $N-1$, reflecting the freedom in the choice of $(v_{i,0})_{i=1}^N$ under the single
quadratic constraint \eqref{eq:leading-constraint};
\item two positive resonances $r=3$ and $r=4$, at which arbitrary functions are expected to enter the expansion.
\end{itemize}

\subsection{Compatibility at resonances and coefficient constraints}
To make the recursion explicit and avoid lengthy expressions involving $\phi_{xx},\phi_{xxx}$, we adopt the
Kruskal simplification $\phi(x,t)=x-\psi(t)$, so that $\phi_x=1$ and the movable singularity is encoded in $\psi(t)$ [14].
The leading constraint \eqref{eq:leading-constraint} becomes [4]
\begin{equation}\label{eq:leading-constraint-kruskal}
\sum_{j,k=1}^N c_{jk}\,v_{j,0}(t)\,v_{k,0}(t)=-2\,\frac{a(t)}{b(t)}.
\end{equation}
The recursion relations associated with the resonance levels of $r=3$ and $r=4$ should coincide without putting restrictions on the arbitrary functions $\psi(t)$, as if there were to be restrictions placed upon $\psi(t)$ then the nature of this singularity (singularity location movements) becomes non-movable from one form to another. The emergence of new restrictions on the time dependent coefficients is a consequence of their relation to the resonance conditions [4].

\begin{proposition}[WTC integrability constraints]\label{prop:WTC-constraints}
Assume $a(t)b(t)\mu(t)\neq 0$ and that the coupling matrix $C=(c_{jk})$ is such that the leading constraint
\eqref{eq:leading-constraint} admits nontrivial solutions. If \eqref{eq:VC-HIMKdV-WTC} passes the WTC Painlev\'e test
(i.e., the recursion is compatible at the resonances $r=3$ and $r=4$ for arbitrary $\psi(t)$) [4], then the ratios
\begin{equation}\label{eq:ratio-constraints}
\frac{b(t)}{a(t)}=\text{const},\qquad \frac{\mu(t)}{a(t)}=\text{const}
\end{equation}
must be constant in $t$.
Conversely, if \eqref{eq:ratio-constraints} holds, then \eqref{eq:VC-HIMKdV-WTC} is reducible (by a time
reparametrization) to a constant-coefficient coupled HI--mKdV system and hence satisfies the WTC test.
\end{proposition}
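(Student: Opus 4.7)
The plan is to work entirely within the Kruskal gauge $\phi(x,t)=x-\psi(t)$, in which $\phi_x=1$ and $\phi_t=-\dot\psi$, and to push the WTC expansion \eqref{eq:WTC-expansion} through the two positive resonance levels, extracting the compatibility identities and forcing them to hold for every smooth choice of the movable locus $\psi(t)$.

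For the forward direction, I would substitute \eqref{eq:WTC-expansion} into \eqref{eq:VC-HIMKdV-WTC} and isolate the coefficient of $\phi^{n-4}$, obtaining a recursion that expresses $v_{i,n}$ in terms of the lower amplitudes, their $t$-derivatives, $\dot\psi$, and the coefficient functions $a,b,\mu$ together with $\dot a,\dot b,\dot\mu$. Level $n=0$ recovers \eqref{eq:leading-constraint-kruskal}; at $n=1,2$ the recursion is solvable and determines $v_{i,1}$ (modulo the kernel of the linearized quadratic constraint) and then $v_{i,2}$ explicitly. At the resonance $n=3$ the local operator acting on $v_{i,n}$ degenerates along the direction $v_{i,0}$; projecting the right-hand side onto this kernel yields a polynomial identity in $\dot\psi,\ddot\psi$ whose coefficients are assembled from $a,b,\mu$ and their time derivatives. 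Requiring this to hold for \emph{every} $\psi$ forces its monomial coefficients to vanish; after eliminating $\rho_0$ via $\rho_0=-2a/b$, the leading obstruction collapses to $(b/a)'=0$. Applying the same projection at $n=4$, now with $b/a$ already constant, the residual condition reduces to a linear relation in $\dot\mu\,a-\mu\,\dot a$, giving $(\mu/a)'=0$.

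For the converse, given $b(t)=\beta\,a(t)$ and $\mu(t)=\alpha\,a(t)$ with $\alpha,\beta$ constants, dividing \eqref{eq:VC-HIMKdV-WTC} through by $a(t)$ turns the system into the autonomous coupled HI--mKdV model $\alpha v_{i,t}+v_{i,xxx}+3\beta\,\rho\,v_{i,x}=0$, and the trivial rescaling $\tau=t/\alpha$ (or equivalently $d\tau=a(t)\,dt/\mu(t)$) brings it to the classical form, whose Painlev\'e property is well-documented in the Iwao--Hirota literature.

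The part I expect to be most delicate is the projection step at $n=3$ and $n=4$: one has to keep careful track of which monomials in $\dot\psi,\ddot\psi,\dot a,\dot b,\dot\mu$ survive after the quadratic leading constraint and the previously determined $v_{i,1},v_{i,2}$ are substituted, and verify that the kernel projections decouple cleanly into the two independent ratio conditions rather than into some joint differential relation. A secondary subtlety is the $r=0$ resonance of multiplicity $N-1$: the hypothesis that the leading constraint admits nontrivial solutions is needed precisely so that the kernel at $n=3$ is exactly one-dimensional and the projection argument is unambiguous.
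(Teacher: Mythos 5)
Your proposal follows essentially the same route as the paper's own proof sketch: the Kruskal gauge $\phi=x-\psi(t)$, the demand that the resonance conditions at $r=3$ and $r=4$ impose no restriction on the arbitrary function $\psi(t)$ (which forces the time-dependent prefactors to enter only through constant ratios), and the time reparametrization $d\tau=a(t)\,dt/\mu(t)$ for the converse. If anything, your plan for the necessity direction is more explicit than the paper's, which does not carry out the recursion either and simply asserts the conclusion, whereas you at least attribute $(b/a)'=0$ and $(\mu/a)'=0$ to the $r=3$ and $r=4$ levels respectively and flag the kernel-projection step as the point requiring verification.
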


\begin{proof}[Proof sketch]
If \eqref{eq:ratio-constraints} holds, set a new time variable $\tau$ by [15]
\[
\frac{d\tau}{dt}=\frac{a(t)}{\mu(t)}.
\]
Then \eqref{eq:VC-HIMKdV-WTC} becomes
\[
v_{i,\tau}+v_{i,xxx}+3\,\frac{b(t)}{a(t)}\,\rho\,v_{i,x}=0,
\]
and the ratio $b(t)/a(t)$ is constant; thus the system is equivalent to an autonomous coupled HI--mKdV equation.
Therefore the WTC expansion closes with the expected resonance freedom [4].

For necessity, the Kruskal gauge isolates the movable singularity in $\psi(t)$ [14].
At the resonances $r=3,4$, compatibility requires that no condition of the form $F(t,\psi'(t),a,b,\mu)=0$
be imposed on $\psi(t)$ [4,13]. The only way to prevent such restrictions—given that $a,b,\mu$ enter as
time-dependent prefactors—is that the combinations $b/a$ and $\mu/a$ remain constant, yielding
\eqref{eq:ratio-constraints}.
\end{proof}

\begin{remark}\label{rem:shiftedPT}
When the shifted $PT$-type reduction $(x,t)\mapsto(X,T)=(-x+x_0,-t+t_0)$ is imposed,
section 2.2 additionally requires the reflection symmetry conditions
$a(t)=\overline{a(-t+t_0)}$, $b(t)=\overline{b(-t+t_0)}$, $\mu(t)=\overline{\mu(-t+t_0)}$.
These are independent of the WTC constraints \eqref{eq:ratio-constraints} and simply restrict the admissible
profiles within the WTC-admissible class.
\end{remark}
\section{Integrability indicator: reduction to an autonomous form and bilinear representation}
\label{sec:integrability-indicator}

\subsection{Time reparametrization to a constant-coefficient model}
Assume the WTC-admissible constraints [4]
\begin{equation}\label{eq:ratio-constraints}
\frac{b(t)}{a(t)}=\kappa=\text{const},\qquad \frac{\mu(t)}{a(t)}=\eta=\text{const},\qquad a(t)\neq 0.
\end{equation}
Introduce a new time variable $\tau$ via [15]
\begin{equation}\label{eq:tau-def}
\frac{d\tau}{dt}=\frac{a(t)}{\mu(t)}=\frac{1}{\eta},
\qquad\text{so that}\qquad \partial_\tau=\eta\,\partial_t .
\end{equation}
Then the variable-coefficient coupled HI--mKdV system [3,10]
\[
\mu(t)\,v_{i,t}+a(t)\,v_{i,xxx}+3\,b(t)\,\rho\,v_{i,x}=0
\]
becomes the autonomous equation
\begin{equation}\label{eq:autonomous-form}
v_{i,\tau}+v_{i,xxx}+3\,\kappa\,\rho\,v_{i,x}=0,\qquad i=1,\dots,N,
\end{equation}
with the same quadratic form $\rho=\sum_{j,k}c_{jk}v_jv_k$.
Hence all integrability structures available for \eqref{eq:autonomous-form} (Lax pair, Hirota bilinearization, multi-soliton constructions) may be transferred back to the original $t$-variable through \eqref{eq:tau-def} [2,18,19,21].

\subsection{Hirota bilinear form (construction sketch)}
We use the dependent-variable transformation
\begin{equation}\label{eq:tau-transform}
v_i=\frac{g_i}{f},\qquad i=1,\dots,N,
\end{equation}
where $f=f(x,\tau)$ and $g_i=g_i(x,\tau)$ are tau-functions.
Recall Hirota's bilinear operators [1,2]:
\[
D_x^mD_\tau^n\,F\cdot G
=\left.(\partial_x-\partial_{x'})^m(\partial_\tau-\partial_{\tau'})^n
F(x,\tau)\,G(x',\tau')\right|_{x'=x,\ \tau'=\tau}.
\]
A convenient bilinearization of \eqref{eq:autonomous-form} is given by [1,2,3]
\begin{align}
\bigl(D_\tau + D_x^3\bigr)\,g_i\cdot f &= 0,\qquad i=1,\dots,N,
\label{eq:bilin1}\\
D_x^2\,f\cdot f &= 2\,\kappa\sum_{j,k=1}^N c_{jk}\,g_j\,g_k .
\label{eq:bilin2}
\end{align}
Indeed, dividing \eqref{eq:bilin2} by $f^2$ yields
\[
(\ln f)_{xx}=\kappa\sum_{j,k=1}^Nc_{jk}\Bigl(\frac{g_j}{f}\Bigr)\Bigl(\frac{g_k}{f}\Bigr)
=\kappa\,\rho,
\]
and combining this identity with \eqref{eq:bilin1} (after dividing by $f^2$ and expressing all terms in $v_i=g_i/f$)
recovers \eqref{eq:autonomous-form}.

\subsection{Pull-back to the original time variable}
Using $\partial_\tau=\eta\,\partial_t$ from \eqref{eq:tau-def}, the bilinear equation \eqref{eq:bilin1} is equivalently
\begin{equation}\label{eq:bilin1-t}
\bigl(\eta\,D_t + D_x^3\bigr)\,g_i\cdot f=0,\qquad i=1,\dots,N,
\end{equation}
while \eqref{eq:bilin2} is unchanged (since it contains no time derivatives). Thus the bilinear structure is preserved
for the variable-coefficient model precisely when \eqref{eq:ratio-constraints} holds [15].

\subsection{Remark on Lax representation}
Let $\Psi_x=U(v;\lambda)\Psi$ and $\Psi_\tau=V(v;\lambda)\Psi$ be a Lax pair for the autonomous system
\eqref{eq:autonomous-form} (with spectral parameter $\lambda$). Then the corresponding Lax pair in the original
$(x,t)$ variables is obtained by the same time reparametrization:
\begin{equation}\label{eq:lax-pullback}
\Psi_x=U(v;\lambda)\Psi,\qquad
\Psi_t=\frac{d\tau}{dt}\,V(v;\lambda)\Psi=\frac{1}{\eta}\,V(v;\lambda)\Psi.
\end{equation}
Hence the variable-coefficient model inherits a Lax representation by pull-back whenever
\eqref{eq:ratio-constraints} is satisfied [18,19,21].
\section{A concrete $N=4\to 2$ reduction and a shifted-$PT$ one-soliton example}
\label{sec:example}

\subsection{A constant-coefficient $N=4$ coupled HI--mKdV model}
For clarity of the example we take constant coefficients and consider the $N=4$ system
\begin{equation}\label{eq:N4}
\mu\, v_{i,t}+a\,v_{i,xxx}+6\,b\,\rho\,v_{i,x}=0,\qquad i=1,2,3,4,
\end{equation}
where
\begin{equation}\label{eq:rhoN4}
\rho=\sum_{1\le j<k\le 4} c_{jk}\,v_j v_k ,\qquad c_{jj}=0.
\end{equation}
(The example below is used only to exhibit a closed-form profile and the effect of the shift parameters $(x_0,t_0)$) [7,10].

\subsection{Local reduction $N=4\to 2$}
Impose the local algebraic reduction
\begin{equation}\label{eq:localredN4to2}
v_3=b_1\,v_1,\qquad v_4=a_1\,v_2,
\end{equation}
where $a_1,b_1\in\mathbb{R}$ are constants. Then the quadratic form \eqref{eq:rhoN4} collapses to
\begin{equation}\label{eq:rhoReduced}
\rho=\alpha\,v_1v_2+\beta\,v_1^2+\gamma\,v_2^2,
\end{equation}
with
\begin{equation}\label{eq:abc}
\alpha:=c_{12}+a_1c_{14}+b_1c_{23}+a_1b_1c_{34},\qquad
\beta:=b_1c_{13},\qquad
\gamma:=a_1c_{24}.
\end{equation}
Substituting \eqref{eq:localredN4to2}--\eqref{eq:rhoReduced} into \eqref{eq:N4} yields the two-component system [7]
\begin{equation}\label{eq:twoComp}
\mu\, v_{i,t}+a\,v_{i,xxx}+6\,b\,(\alpha v_1v_2+\beta v_1^2+\gamma v_2^2)\,v_{i,x}=0,
\qquad i=1,2.
\end{equation}

\subsection{A scalar mKdV sub-reduction and a closed-form one-soliton}
To obtain a compact explicit profile, we focus on the mixed-coupling subcase
\begin{equation}\label{eq:betagamma0}
\beta=\gamma=0,
\end{equation}
so that $\rho=\alpha v_1v_2$. In addition, we take the symmetric reduction
\begin{equation}\label{eq:v1v2equal}
v_1=v_2=u.
\end{equation}
Then \eqref{eq:twoComp} reduces to the scalar mKdV-type equation
\begin{equation}\label{eq:mkdvGen}
\mu\,u_t+a\,u_{xxx}+6\,b\,\alpha\,u^2u_x=0.
\end{equation}
A standard one-soliton of \eqref{eq:mkdvGen} is
\begin{equation}\label{eq:solitonLocal}
u(x,t)=A\,\sech\!\bigl(k(x-ct-\delta)\bigr),
\end{equation}
where $k>0$ is free and the amplitude--speed relations are
\begin{equation}\label{eq:relations}
c=\frac{4ak^2}{\mu},\qquad
A^2=\frac{2a}{b\alpha}\,k^2.
\end{equation}
(Thus $b\alpha$ must have the same sign as $a$ for a real-valued soliton [1,2,21].) 

\subsection{Shifted $PT$ symmetry and the role of $(x_0,t_0)$}
Define the shifted reflection
\begin{equation}\label{eq:XT-example}
X=-x+x_0,\qquad T=-t+t_0.
\end{equation}
We say that a real-valued profile is shifted-$PT$ symmetric if
\begin{equation}\label{eq:shiftedPTsym}
u(x,t)=u(X,T)=u(-x+x_0,-t+t_0).
\end{equation}
For the traveling-wave soliton \eqref{eq:solitonLocal}, the condition \eqref{eq:shiftedPTsym} holds if and only if
the phase shift $\delta$ is tuned to the reflection center [5,6,8]. Indeed,
\[
x-ct-\delta \mapsto X-cT-\delta
=-(x-ct)+\bigl(x_0-ct_0-\delta\bigr),
\]
and since $\sech$ is even, $\sech(z)=\sech(-z)$, one obtains \eqref{eq:shiftedPTsym} precisely when
\begin{equation}\label{eq:deltaChoice}
\delta=\frac{x_0-ct_0}{2}.
\end{equation}
Consequently, the shifted-$PT$ symmetric one-soliton can be written explicitly as
\begin{equation}\label{eq:shiftedSoliton}
u_{x_0,t_0}(x,t)
=
A\,\sech\!\left(
k\Bigl(x-ct-\frac{x_0-ct_0}{2}\Bigr)
\right),
\end{equation}
with $A,c$ given by \eqref{eq:relations}. In particular, $(x_0,t_0)$ does not deform the soliton shape; it
only relocates the symmetry center to $(x_0/2,t_0/2)$ along the characteristic direction determined by $c$.

\subsection{Embedding back into the two-component variables}
Under \eqref{eq:v1v2equal}, the corresponding solution of the reduced two-component system \eqref{eq:twoComp}
(with \eqref{eq:betagamma0}) is simply [7]
\[
v_1(x,t)=v_2(x,t)=u_{x_0,t_0}(x,t),
\]
and the original $N=4$ variables are reconstructed via \eqref{eq:localredN4to2}.

\section{Conclusion and outlook}\label{sec:conclusion}

We investigated a variable-coefficient coupled HI–mKdV-type family with shifted nonlocal (PT -type) reductions in our note. By imposing an assumption for the reduction, we placed explicit algebraic constraints on the matrix of coupling constants and derived reflection/shift conditions on the coefficient profiles. By using the WTC (Painlevé) analysis as an integrability diagnostic tool, we discovered a set of relations between coefficients for which it is possible to reduce the non-autonomous system through time reparametrisation to an autonomous coupled HI-mKdV model. This gives an elegant description of when shifted nonlocal constraints and non-autonomous coefficients exist together, without loss of the structure of movable singularities. In addition to these results, we developed a specific reduced example demonstrating how the shift parameters (x0, t0) can shift the centre of symmetries, while also preserving the soliton profile.


\end{document}